\documentclass[conference]{IEEEtran}

\hyphenation{op-tical net-works semi-conduc-tor}
\makeatletter

\newcommand{\Rmnum}[1]{\expandafter\@slowromancap\romannumeral #1@}
\makeatother

\usepackage{graphicx,subfigure,epsfig}
\usepackage{cite,verbatim,multirow}
\usepackage{amssymb,amsmath,amsfonts}
\usepackage{amsthm}
\usepackage{mathrsfs}
\usepackage{float}
\usepackage{soul}
\usepackage{colortbl, color}
\usepackage{cases}
\usepackage[ruled,vlined,linesnumbered]{algorithm2e}

\usepackage{algpseudocode}

\usepackage{subfigure}
\usepackage{url, xcolor}
\usepackage{verbatim, booktabs}

\newtheorem{lemma}{Lemma}

\usepackage{ulem}
\normalem

\newcommand{\ls}[1]  
   {\dimen0=\fontdimen6\the=#1\dimen0
    \advance\lineskip.5\fontdimen5\the\lineskip-\dimen0
    \lineskiplimit=.9\lineskip
    \baselineskip=\lineskip
    \advance\baselineskip\dimen0
    \normallineskip\lineskip
    \normallineskiplimit\lineskiplimit
    \normalbaselineskip\baselineskip
    \ignorespaces
   }

\begin{document}
\bibliographystyle{ieeetr}

\title{Interference Management in Massive MIMO HetNets: A Nested Array Approach} 

\author{\authorblockN{Mingjie Feng and Shiwen Mao}
\authorblockA{Dept. Electrical \& Computer Engineering, Auburn University, Auburn, AL 36849-5201}
Email: mzf0022@auburn.edu, smao@ieee.org}

\maketitle

\begin{abstract}
The nested array, which is implemented by nonuniform antenna placement, is an effective approach to achieve $O(N^2)$ degrees of freedom (DoF) with an antenna array of $N$ antennas. Such DoF refers to the number of directions of incoming signals that can be resolved. With the increased number of DoF, an important application of nested array is to nullify the interference signals from multiple directions. In this paper, we apply nested array in a massive MIMO heterogeneous network (HetNet) for interference management. With nested array based interference nulling, each base station (BS) can nullify a certain number of interference signals. A key design issue is to select the interference sources to be nullified at each BS. We formulate this problem as an integer programming problem. The objective is to maximize the sum rate of all users, subject to BS DoF constraints. We propose an approximation scheme to solve this problem and derive a performance upper bound. Simulation results show that the proposed scheme effectively improves the sum rate and achieves a near optimal performance.
\end{abstract}

\begin{keywords}
5G Wireless; massive MIMO; nested array; interference nulling; heterogeneous networks (HetNet).
\end{keywords}

\pagestyle{plain}\thispagestyle{plain}


\section{Introduction}

{\em Massive MIMO} (Multiple Input Multiple Output) and {\em small cell} are recognized as two key technologies for 5G wireless systems due to their great potential to enhance network capacity~\cite{Andrews14}. In a massive MIMO (also known as large-scale MIMO or very large MIMO) system, the base station (BS) is equipped with more than 100 antennas and serves multiple users with the same spectrum band~\cite{Marzetta10}. With aggressive spatial multiplexing, a massive MIMO can dramatically improve both energy and spectral efficiency compared to traditional wireless systems~\cite{Ngo13,Xu14Access,Feng16Network}. Small cell deployment achieves high signal to noise ratio (SNR) and high spectrum spatial reuse due to short transmission distance and small coverage area. As a result, a heterogeneous network (HetNet) with small cells can significantly boost network capacity compared to traditional macrocell network.

Due to these benefits, massive MIMO HetNet, which integrates these two techniques, has drawn considerable attention recently~\cite{Hosseini13,Bjornson13,Bethanabhotla14,Xu15,Feng16}. In a massive MIMO HetNet, multiple small cell BS's (SBS) coexist with a macrocell BS (MBS) equipped with a large number of antennas. Due to spectrum scarcity in cellular networks, small cells are expected to share the same spectrum band with the macrocell, resulting in cross-tier interference. While interference management in a regular HetNet mainly focuses on resource allocation in the time-frequency domain~\cite{Feng14}, the spatial characteristics of massive MIMO can be exploited to mitigate interference in massive MIMO HetNets. In~\cite{Adhikary15}, a spatial blanking scheme was proposed in which the transmission energy of MBS is focused on certain directions that do not cause interference to small cells. In~\cite{Hosseini13}, a reversed time division duplex (RTDD) architecture was introduced. 
Since the channels between MBS and SBS's are quasi-static, the MBS can carry out zero-forcing beamforming based on the estimated channel covariance. In~\cite{Bjornson13}, coordinated transmission is assumed between MBS and SBS's, so that each user receives signals from both MBS and SBS's. Through coordinated beamforming vectors for the BS's, the interference between different transmissions can be minimized.

In this paper, we consider interference management in massive MIMO HetNets from the perspective of antenna array processing techniques. The idea is to identify the directions of interference sources, and then the interference from certain number of directions can be nullified at the antenna array using a second order processing technique called {\em nested array}~\cite{Pal10}. Based on the concept of {\em difference co-array}, a nested array is implemented by nonuniform antenna placement. Nested array achieves a degree of freedom (DoF) of $O(N^2)$ with only $N$ antennas. Further, the direction of arrival estimation is performed with a {\em passive sensing} pattern, i.e., the antenna array does not need to send out signals for detection. Due to the benefits of passive sensing, easy implementation, and large number of DoF, we apply nested array to interference management in massive MIMO HetNets. With the nested array configuration at each SBS, the directions of $O(N^2)$ users can be estimated. Then, the interference from different directions can be filtered, so that the desired signal remains while interference signals are nullified.

Given the DoF of each SBS, a key design problem is to select the set of interfering users that use the DoF for interference nulling. We formulate such a problem as an integer programming problem with the objective of maximizing the sum rate of a massive MIMO HetNet. Since the objective function is nonlinear and nonconvex, we propose a series of approximations that transform the original problem into an integer programming problem with a linear objective function. The optimal solution of such a problem can be obtained with a {\em cutting plane} approach. Moreover, we find that if an additional condition is satisfied by the system, the constraint matrix is {\em unimodular} and the integer problem is equivalent to an linear programming (LP) problem obtained by relaxing the integer constraints. Thus an optimal solution can be obtained with an LP solver. To evaluate the performance of the solution obtained by the approximation, we derive a performance upper bound for comparison purpose. The proposed scheme is evaluated with simulations and compared with other benchmark schemes. The results show that near optimal performance can be achieved.

The remainder of this paper is organized as follows. The nested array based interference nulling method is introduced in Section~\ref{sec:nest}. The system model and problem formulation are presented in Section~\ref{sec:prob} and the solution is presented in Section~\ref{sec:sol}. The simulation results are discussed in Section~\ref{sec:sim}. We conclude this paper in Section~\ref{sec:con}.

\section{Preliminaries \label{sec:nest}}

\subsection{Signal Model of Difference Co-Array}

Consider an antenna array with $N$ antennas, the $N \times 1$ steering vector corresponding to direction $\theta$ is denoted as ${\bf{a}}(\theta )$. Let $d_i$ be the position of the $i$th antenna and $\lambda$ the carrier wavelength. The $i$th element of ${\bf{a}}(\theta)$ is ${e^{j\left( {2\pi /\lambda } \right){d_i}\sin \theta }}$. Suppose $D $ narrowband sources from directions $\left\{ {{\theta _i},i = 1,2,...,D} \right\}$ impinge the antenna array with powers $\left\{ {\sigma _i^2,i = 1,2,...,D} \right\}$. The received signal is given by
\begin{align}\label{eq1}
{\bf{r}} [ m ] = {\bf{F} \boldsymbol{\gamma}}[ k ] + {\bf{n}}\left[ m \right], \; m=1,2,...,N,
\end{align}
where ${\boldsymbol{\gamma}}{\left[ m \right]_{D  \times 1}} = {\left[ {{\gamma_1}\left[ m \right],{\gamma_2}\left[ m\right],..., {\gamma_D}\left[ m \right]} \right]^T}$ is the source signal vector, ${\bf{F}} = \left[ {{\bf{f}}\left( {{\theta _1}} \right),{\bf{f}}\left( {{\theta _1}} \right),..., {\bf{f}}\left( {{\theta _D}} \right)} \right]$ is the array manifold matrix, and ${\bf{n}}\left[ m \right]$ is the white noise vector. We assume that the sources are temporally uncorrelated. Hence the autocorrelation matrix of ${\boldsymbol{\gamma}}\left[ m \right]$ is diagonal. Then, the autocorrelation matrix of the received signal is given by~\cite{Hoctor90}
\begin{align}\label{eq2}
{{\boldsymbol{\Omega}}_{{\bf{rr}}}} =&\; \mathbb{E} \left[ {{\bf{r}}{{\bf{r}}^H}} \right] = {\bf{F}}{{\boldsymbol{\Omega}}_{{\boldsymbol \gamma} {\boldsymbol \gamma}}} {{\bf{F}}^H} + \sigma _n^2{\bf{I}} \nonumber\\
                       =&\; {\bf F} \left( {\begin{array}{*{20}{c}}
                       {\sigma _1^2}&{}&{}&{}\\
                       {}&{\sigma _2^2}&{}&{}\\
                       {}&{}& \ddots &{}\\
                       {}&{}&{}&{\sigma _D^2}
                                     \end{array}} \right){\bf F}^H + \sigma _n^2{\bf{I}}.
\end{align}

We next vectorize ${{\boldsymbol{\Omega}}_{{\bf{rr}}}}$ and obtain the following vector~\cite{Hoctor90}.
\begin{align}\label{eq3}
{\bf{z}} =&\; {\rm{vec}}\left( {{{\boldsymbol{\Omega}}_{{\bf{rr}}}}} \right) = {\rm{vec}}\left[ {\sum\limits_{i = 1}^D {\sigma _i^2\left( {{\bf{f}}\left( {{\theta _i}} \right){{\bf{f}}^H}\left( {{\theta _i}} \right)} \right)} } \right] + \sigma _n^2\mathop {{{\bf{1}}_n}}\limits^ \to  \nonumber\\
         =&\; \left( {{{\bf{F}}^*} \odot {\bf{F}}} \right){\bf{p}} + \sigma _n^2\mathop {{{\bf{1}}_n}}\limits^ \to,
\end{align}
where ${\bf{p}} = {\left[ {\sigma _1^2, \sigma _2^2, ..., \sigma _D^2} \right]^T}$ is the power vector of the $D$ sources. $\mathop {{{\bf{1}}_n}}\limits^ \to   = {\left[ {{\bf{e}}_1^T, {\bf{e}}_2^T, ..., {\bf{e}}_N^T} \right]^T}$, and ${{\bf{e}}_i}$ is a column vector with $1$ at the $i$th position and all other elements $0$.
Comparing \eqref{eq3} with \eqref{eq1}, we find that ${\bf{z}}$ can be regarded as a signal received at an array with manifold matrix given as ${{{\bf{F}}^*} \odot {\bf{F}}}$, where $\odot$ denotes the Khatri-Rao (KR) product. The corresponding source signal is ${\bf{p}}$ and the noise vector is given as $\sigma _n^2\mathop {{{\bf{1}}_n}}\limits^ \to  $. Analyzing the manifold matrix ${{{\bf{F}}^*} \odot {\bf{F}}}$, we find that the distinct rows of ${{{\bf{F}}^*} \odot {\bf{F}}}$ behave like the manifold of an array with antenna positions given by distinct values in the set $\left\{ {\mathop {{{\boldsymbol{\varepsilon}}_i}}\limits^ \to   - \mathop {{{\boldsymbol{\varepsilon}}_j}}\limits^ \to  ,1 \le i,j \le N} \right\}$, where ${\mathop {{{\boldsymbol{\varepsilon}}_i}}\limits^ \to  }$ is the position vector of the original array. The new array is the difference co-array of the original array~\cite{Hoctor90}.

In a difference co-array with antenna positions given in the set $\left\{ {\mathop {{{\boldsymbol{\varepsilon}}_i}}\limits^ \to   - \mathop {{{\boldsymbol{\varepsilon}}_j}}\limits^ \to  } \right\},\forall i,j = 1,2,...,N$, it is easy to see that the number of elements in this set is $N(N - 1)+1$. Thus, given the original $N$-antenna array, the maximum DoF of a difference co-array is
\begin{align}\label{eq4}
{\rm{DO}}{{\rm{F}}_{\max }} = N(N - 1).
\end{align}
We thus conclude that a DoF of $O({N^2})$ can be achieved with $N$ antennas by exploiting the second order statistics of the received signal~\cite{Hoctor90}. This opens tremendous opportunities to detect more sources than the number of physical antenna elements.

\subsection{Nested Array: An Effective Approach to Increase DoF}

Based on the difference co-array framework, the nested array was proposed in~\cite{Pal10} as an effective solution to the problem of resolving more sources than antenna elements. Nested array is characterized by {\em non-uniform antenna array placement} and second order statistic processing of the received signal. According to the analysis in~\eqref{eq3}, the difference co-array of a nested array has $O({N^2})$ antenna elements, and thus a nested array achieves a DoF of $O({N^2})$. Compared to existing methods on increasing DoF, the nested array approach is easy to implement with reduced overhead and can be applied to more general scenarios since less assumptions are needed for the system model. In addition, the nested array operates in a passive sensing pattern, which only needs to receive source signals. These favorable features make nested array suitable to applications in cellular networks. The implementation and setup process of a nested array are described in~\cite{Pal10}.

\subsection{Interference Nulling with Nested Array}

An important application of nested array is its capability of interference nulling. Let ${\bf{z}} = \left( {{{\bf{F}}^*} \odot {\bf{F}}} \right){\bf{p}} + \sigma _n^2\mathop {{{\bf{1}}_n}}\limits^ \to $ be the equivalent received signal at the difference co-array of the nested array of an SBS. Suppose we apply a beamforming with weight vector ${\bf{w}}$. Then, the resulting signal is given by
\begin{align}\label{eq5}
r' = {{\bf{w}}^H}{\bf{z}} = \sum\limits_{i = 1}^D {{{\bf{w}}^H}\left( {{{\bf{a}}^*}\left( {{\theta _i}} \right) \otimes {\bf{a}}\left( {{\theta _i}} \right)} \right)\sigma _i^2}  + \sigma _n^2{{\bf{w}}^H}\mathop {{{\bf{1}}_n}}\limits^ \to  .
\end{align}
In~(\ref{eq5}), $r'$ can be regarded as a weighted sum of ${\sigma _i^2},i=1,2,...,D$, and $\sigma _n^2$ with wights given as ${{{\bf{w}}^H}\left( {{{\bf{f}}^*}\left( {{\theta _i}} \right) \otimes {\bf{f}}\left( {{\theta _i}} \right)} \right)}$ and ${{\bf{w}}^H}\mathop {{{\bf{1}}_n}}\limits^ \to $, respectively. Define the new beam pattern as
\begin{align}\label{eq6}
B({\theta _i}) = {{\bf{w}}^H}\left( {{{\bf{f}}^*}\left( {{\theta _i}} \right) \otimes {\bf{f}}\left( {{\theta _i}} \right)} \right).
\end{align}
Thus, the powers of sources from different directions get spatially filtered by $B({\theta _i}), i=1,2,...,D$. It is then possible to adjust these new beam patterns so that the antenna array only receives desired signals, while nulling noise and interference signals.

For an SBS with nested array, suppose the directions of its SUEs are $\left\{ {{\delta  _l},l = 1,2,...,L} \right\}$, and the directions of interfering SUEs and MUEs are $\left\{ {{\eta  _i},i = 1,2,...,I} \right\}$. Then, the beam patterns of different directions are expected to be
\begin{align}\label{eq7}
\begin{array}{l}
B\left( {{\delta _l}} \right) = 1,~l = 1,2,...,L,\\
B\left( {{\eta _i}} \right) = 0,~i = 1,2,...,I.
\end{array}
\end{align}
According to the expression of ${\bf{z}}$ in~\eqref{eq3}, the beamforming weight vector should satisfy
\begin{align}\label{eq8}
\left( {\begin{array}{*{20}{c}}
{{{\left( {{{\bf{f}}^*}\left( {{\delta _1}} \right) \otimes {\bf{f}}\left( {{\delta _1}} \right)} \right)}^H}}\\
 \vdots \\
{{{\left( {{{\bf{f}}^*}\left( {{\delta _L}} \right) \otimes {\bf{f}}\left( {{\delta _L}} \right)} \right)}^H}}\\
{{{\left( {{{\bf{f}}^*}\left( {{\eta _1}} \right) \otimes {\bf{f}}\left( {{\eta _1}} \right)} \right)}^H}}\\
 \vdots \\
\begin{array}{l}
{\left( {{{\bf{f}}^*}\left( {{\eta _I}} \right) \otimes {\bf{f}}\left( {{\eta _I}} \right)} \right)^H}\\
{~~~~~~~~( {\mathop {{{\bf{1}}_n}}\limits^ \to  } )^T}
\end{array}
\end{array}} \right){\bf{w}} = \left( {\begin{array}{*{20}{c}}
1\\
 \vdots \\
1\\
0\\
 \vdots \\
\begin{array}{l}
0\\
0
\end{array}
\end{array}} \right).
\end{align}

With the solution of ${\bf{w}}$, the weight vector of the original antenna array can be determined by the method presented in~\cite{Pal10}. It can be observed from~\eqref{eq8} that the DoF to identify and manage the desired signals, noise, and interfering signals is $O({N^2})$. This enforces a constraint on the number of desired and interference sources that can be managed, and we will consider this in the problem formulation.
The nested array based interference nulling approach provides a new perspective to interference management, by employing spatial filtering on different directions. With such desirable features of nested array, it is highly promising to apply this technique to interference management in massive MIMO HetNets.

\section{Problem Formulation \label{sec:prob}}

We consider a two-tier massive MIMO HetNet consists of one MBS with massive MIMO (labeled as BS $j=0$) and multiple SBS's with regular MIMO (denoted as $j=1,2,...,J$). There are $K$ users (indexed by $k=1,2,...,K$) to be served, and we assume that the set of users served by each BS is pre-determined (e.g., through a user association algorithm~\cite{Xu15}). Let ${x_{k,j}}$ be the user association variable defined as
\begin{align}\label{eq9}
x_{k,j} \doteq \left\{ \begin{array}{ll}
	         1, & \mbox{user $k$ is associated with BS $j$} \\
					 0, & \mbox{otherwise,}
					          \end{array} \right. \nonumber \\
									k=1,2,...,K, \; j=0,1,...,J,
\end{align}
which are
predetermined for all $k$, $j$.

The macrocell and small cells share the same spectrum band and both tiers adopt the time division duplex (TDD) mode in a synchronized way, i.e., the two tiers use the same time period for uplink or downlink transmissions. The SBS's use nested array to perform interference nulling, so that the uplink interference from a certain number of users can be nulled with the beamforming process presented in~\eqref{eq8}. Since the nested array requires second order processing on all antennas, the MBS with massive MIMO adopts the traditional linear array for DoA estimation and interference management due to complexity concerns. With the DoAs of the interference links, both MBS and SBS's can perform directional transmissions to avoid downlink interference to a certain number of users. This way, the mutual interference between the BS's and some users can be eliminated. 

Define binary variables $n_{k,j}$ for interference nulling as
\begin{align}\label{eq10}
  n_{k,j} \doteq \left\{ \begin{array}{ll}
	         1, & \mbox{BS $j$ nulls interference from user $k$} \\
					 0, & \mbox{otherwise,}
					          \end{array} \right. \nonumber \\
									k=1,2,\ldots,K, \; j=0,1,\ldots,J.
\end{align}
According to our analysis on~\eqref{eq8}, $n_{k,j}$ should satisfy
\begin{align}\label{eq11}
\sum\limits_{k = 1}^K {{x_{k,j}}{q_{k,j}}}  + \sum\limits_{k = 1}^K {{n_{k,j}}{q_{k,j}}} + 1 \le {D_j},~j=0,1,...,J,
\end{align}
where $q_{k,j}$ is the number of multipath from user $k$ to BS $j$, $D_j$ is the upper bound for the number of directions that can be resolved by BS $j$. It is assumed that noise is always nulled at the BS's with one DoF.

We use the data rate model of massive MIMO HetNet in~\cite{Bethanabhotla14}. For SUEs, the sum of uplink and downlink data rate of user $k$ connecting to SBS $j$ can be approximated as
\begin{align}\label{eq12}
& R_{k,j} = \log \left( {1 + \frac{{{M_j} - {S_j} + 1}}{{{S_j}}}  \frac{{{p_k}{g_{k,j}}}}{{\sum\limits_{k' \ne k} {{p_{k'}}{g_{k',j}}\left( {1 - {n_{k',j}}} \right)} }}} \right) + \nonumber \\
&~~~~ \log \left( {1 + \frac{{{M_j} - {S_j} + 1}}{{{S_j}}}  \frac{{{p_j}{g_{k,j}}}}{{1 + \sum\limits_{j' \ne j} {{p_{j'}}{g_{k,j'}}\left( {1 - {n_{k,j'}}} \right)} }}} \right), \nonumber\\
&~~~~~k=1,2,...,K, \; j=1,...,J,
\end{align}
where $p_k$ and $p_j$ are the power of user $k$ and BS $j$, respectively; $g_{k,j}$ is the large-scale channel gain between user $k$ and BS $j$; $M_j$ is the number of antennas of BS $j$; $S_j$ is the upper bound for the number of users that can be served by BS $j$.

For a macrocell user, due to the law of large numbers in a massive MIMO system, the interference caused by other macrocell users can be averaged out. The sum of uplink and downlink data rates of user $k$ connecting to MBS is given by
\begin{align}\label{eq13}
& R_{k,0} = \log \left( {1 \hspace{-0.025in}+\hspace{-0.025in} \frac{{{M_0} \hspace{-0.025in}-\hspace{-0.025in} {S_0} \hspace{-0.025in}+\hspace{-0.025in} 1}}{{{S_0}}}  \frac{{{p_k}{g_{k,0}}}}{{\sum\limits_{j = 1}^J {\sum\limits_{k \in {\mathcal{U}_j}} {{p_{k'}}{g_{k',0}}\left( {1 \hspace{-0.025in}-\hspace{-0.025in} {n_{k',0}}} \right)} } }}} \right) \hspace{-0.025in}+ \nonumber \\
&~~~~ \log \left( {1 + \frac{{{M_0} - {S_0} + 1}}{{{S_0}}}  \frac{{{p_0}{g_{k,0}}}}{{1 + \sum\limits_{j = 1}^J {{p_j}{g_{k,j}}\left( {1 - {n_{k,j}}} \right)} }}} \right), \nonumber\\
&~~~~~k=1,2,...,K,
\end{align}
where $\mathcal{U}_j=\left\{ {k\left| {{x_{k,j}} = 1} \right.} \right\}$ is the set of UEs served by BS $j$.

The sum rate maximization of a massive MIMO HetNet is formulated as
\begin{align}
& {\bf P1:\/}  \max_{\left\{ {n_{k,j}} \right\}}  \sum_{k = 1}^K x_{k,0} R_{k,0} + \sum_{k = 1}^K \sum_{j = 1}^J x_{k,j} R_{k,j} \label{eq14}  \\
&\mbox{subject to:} \nonumber \\
&\hspace{0.1in} \sum_{k = 1}^K {{x_{k,j}}{q_{k,j}}}  + \sum_{k = 1}^K {{n_{k,j}}{q_{k,j}}} + 1 \le {D_j}, \; j=0,1,...,J \label{eq15} \\
&\hspace{0.1in} n_{k,j} \le 1 - x_{k,j}, \; k=1,2,\ldots,K, \; j=0,1,...,J \label{eq16} \\
&\hspace{0.1in} \; n_{k,j} \in \left\{ 0, 1 \right\}, \; k=1,2,\ldots,K, \; j=0,1,...,J. \label{eq17}
\end{align}
Constraint~\eqref{eq16} is due to the fact that when BS $j$ serves user $k$, it does not need to null interference from user $k$.

\section{Solution Algorithm \label{sec:sol}}

Problem {\bf P1\/} is an integer programming program with a nonlinear and non-convex objective function, which is generally NP-hard. To make the problem tractable, we assume the system operate in the high SINR regime, so that $\log \left( {1 + {\rm{SINR}}} \right) \approx \log \left( {{\rm{SINR}}} \right)$. The high SINR assumption is reasonable in a massive MIMO HetNet due to the large antenna array gain of massive MIMO and the short transmission distance of small cells. Applying this approximation to (\ref{eq12}) and (\ref{eq13}), the objective function of problem {\bf P1\/} is given as
\begin{align}\label{eq18}
\sum_{k = 1}^K \sum_{j = 1}^J x_{k,j} V_{k,j},~k=1,2,...,K, \; j=0,1,...,J,
\end{align}
where $V_{k,j}$ is given in~(\ref{eq19}) (see the next page).
\begin{figure*}
\begin{align}\label{eq19}
\begin{array}{l}
{V_{k,0}} = \log \left( {\frac{{{M_0} - {S_0} + 1}}{{{S_0}}}{p_k}{g_{k,0}}} \right) \hspace{-0.025in}+\hspace{-0.025in} \log \left( {\frac{{{M_j} - {S_j} + 1}}{{{S_j}}}{p_0}{g_{k,0}}} \right) \hspace{-0.025in}-\hspace{-0.025in} \log \left( {\sum\limits_{j = 1}^J {\sum\limits_{k \in {U_j}} {{p_{k'}}{g_{k',0}}\left( {1 \hspace{-0.025in}-\hspace{-0.025in} {n_{k',0}}} \right)} } } \right)
\hspace{-0.025in}-\hspace{-0.025in} \log \left( {1 \hspace{-0.025in}+ \hspace{-0.025in}\sum\limits_{j = 1}^J {{p_j}{g_{k,j}}\left( {1 \hspace{-0.025in}-\hspace{-0.025in} {n_{k,j}}} \right)} } \right) \\
{V_{k,j}} = \log \left( {\frac{{{M_j} - {S_j} + 1}}{{{S_j}}}{p_k}{g_{k,j}}} \right) \hspace{-0.025in}+\hspace{-0.025in} \log \left( {\frac{{{M_j} - {S_j} + 1}}{{{S_j}}}{p_j}{g_{k,j}}} \right) \hspace{-0.025in}-\hspace{-0.025in} \log \left( {\sum\limits_{k' \ne k} {{p_{k'}}{g_{k',j}}\left( {1 - {n_{k',j}}} \right)} } \right) \hspace{-0.025in}-\hspace{-0.025in} \log \left( {1 \hspace{-0.025in}+\hspace{-0.025in} \sum\limits_{j' \ne j} {{p_{j'}}{g_{k,j'}}\left( {1 \hspace{-0.025in}-\hspace{-0.025in} {n_{k,j'}}} \right)} } \right). 
\end{array}
\end{align}
\vspace{-0.3in}
\end{figure*}
We remove the constants in (\ref{eq19}) and apply the property $\sum_i {\log {x_i}}=\log \left( {\prod_i {{x_i}} } \right)$. Since $\log ( \cdot )$ is a monotonic function, {\bf P1\/} can be transformed into the following problem.
\begin{align}
   {\bf P2:\/} &\; \max_{\left\{ {n_{k,j}} \right\}}  \prod\limits_{j = 0}^J {\prod\limits_{k \in {\mathcal{U}_j}} {{W_{k,j}}} }  \label{eq20}  \\
   \mbox{subject to:} &\;\; (\ref{eq15}), (\ref{eq16}), \mbox{and } (\ref{eq17}), \nonumber
\end{align}
where
\begin{align}\label{eq21}
&{W_{k,0}} = \nonumber\\
&\left[ {\sum\limits_{j = 1}^J {\sum\limits_{k \in {\mathcal{U}_j}} {{p_{k'}}{g_{k',0}}\left( {1 - {n_{k',0}}} \right)} } } \right]\left[ {1 + \sum\limits_{j = 1}^J {{p_j}{g_{k,j}}\left( {1 - {n_{k,j}}} \right)} } \right], \nonumber\\
&\;\; k=1,2,...,K.
\end{align}
\begin{align}\label{eq22}
&{W_{k,j}} = \nonumber\\
&\left[ {\sum\limits_{k' \ne k} {{p_{k'}}{g_{k',j}}\left( {1 - {n_{k',j}}} \right)} } \right]\left[ {1 + \sum\limits_{j' \ne j} {{p_{j'}}{g_{k,j'}}\left( {1 - {n_{k,j'}}} \right)} } \right] \nonumber\\
&\;\; k=1,2,...,K, \; j=1,...,J.
\end{align}

It can be seen that the objective function of {\bf P2\/} is a product of linear expressions, which can be expressed as a polynomial on the set of variables $\left\{ {{n_{k,j}}} \right\}$. Thus, {\bf P2\/} is nonlinear integer programming with a complicated form, which is hard to solve with normal approaches. However, we can make use of a property of 0-1 problems to approximate problem {\bf P2\/} with a linear integer programming problem. Then, the cutting plane method~\cite{Gomory58} can be employed to effectively obtain the optimal solution of the linear integer programming problem.

\subsubsection{Linear Approximation of {\bf P2\/}}

Consider the product of 0-1 variables, with all the variables having the same probability distribution. When the number of variables is increased, the product becomes less likely to be $1$ since it is less likely that all the variables are $1$. As the objective function of {\bf P2\/} is a weighted sum of products of 0-1 variables, the values of higher-order parts are more likely to be $0$. Thus, the impact of the higher-order parts is limited. Let $P$ be the probability that an arbitrary ${n_{k,j}}$ equals to $1$. In the objective function of {\bf P2\/}, the probability for an $M$-th order product to be $1$ is $P^M$.
\begin{lemma}\label{lemma1}
$P$ can be approximated by $\frac{\bar{D}_j - 1}{{\bar q}_{k,j} K} - \frac{1}{J} - \frac{1}{{\bar q}_{k,j} J K}$, where $\bar{z}$ is the mean of a variable $z$.
\end{lemma}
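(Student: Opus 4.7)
\textit{Proof plan.} My starting point is the DoF budget constraint \eqref{eq15}, which ties together the three quantities in play: the association pattern $\{x_{k,j}\}$ (fixed by the user-BS association), the nulling decisions $\{n_{k,j}\}$ (whose marginal probability is the unknown $P$), and the DoF cap $D_j$. Since $P$ is a marginal probability, the natural route is to take the constraint and pass to expectations, which turns a combinatorial inequality into an affine relation that can be solved for $P$.

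\textbf{Step 1: Work from a tight DoF budget.} Observe that nulling is ``free'' in the model---every additional $n_{k,j}=1$ can only reduce interference in the objective \eqref{eq20}, so at any extremal/typical feasible point of interest \eqref{eq15} is binding. I would replace the inequality by
\begin{equation*}
\sum_{k=1}^K x_{k,j} q_{k,j} \;+\; \sum_{k=1}^K n_{k,j} q_{k,j} \;+\; 1 \;=\; D_j.
\end{equation*}

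\textbf{Step 2: Average out.} Treat $\{n_{k,j}\}$ as Bernoulli$(P)$ (this is the definition of $P$) and replace $q_{k,j}$ and $D_j$ by their means $\bar q_{k,j}$ and $\bar D_j$. Replace $\bar x_{k,j}$ by its value implied by the user-association structure: each user is assigned to exactly one BS, so $\sum_{j=0}^J x_{k,j}=1$ forces $\bar x_{k,j} \approx 1/J$ (treating the $J{+}1$ BSs as roughly $J$ in the large-$J$ limit, or absorbing the MBS contribution separately). This produces
\begin{equation*}
K\,\bar x_{k,j}\,\bar q_{k,j} \;+\; K\,P\,\bar q_{k,j} \;+\; 1 \;\approx\; \bar D_j.
\end{equation*}

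\textbf{Step 3: Solve for $P$ and track the correction.} Dividing through by $K\,\bar q_{k,j}$ and isolating $P$ gives the leading two terms
\begin{equation*}
P \;\approx\; \frac{\bar D_j - 1}{\bar q_{k,j}\,K} \;-\; \frac{1}{J}.
\end{equation*}
The third term $-1/(\bar q_{k,j} J K)$ is a second-order correction; I would obtain it by refining the approximation $\bar x_{k,j}\approx 1/J$ to account for the ``$+1$'' noise slot in \eqref{eq15} being implicitly shared across the $J$ BSs (equivalently, absorbing the $1/(K\bar q_{k,j})$ residue in the division by an extra factor of $1/J$ arising from the association constraint $\sum_j x_{k,j}=1$). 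Collecting everything yields the claimed expression.

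\textbf{Main obstacle.} The delicate part is not the algebra but the justification of the mean-field substitutions: the $q_{k,j}$, $x_{k,j}$, and $n_{k,j}$ are correlated across $j$ through \eqref{eq15} and \eqref{eq16}, and a single user's association decision shifts the conditional distribution of the other $n_{k',j}$ at the same BS. A fully rigorous argument would require either a symmetry/exchangeability hypothesis on the $\{q_{k,j}\}$ or an explicit independence assumption on the random selection of nulling targets; the lemma should be read as a first-order mean-field estimate under these assumptions, useful for quantifying the decay $P^M$ of high-order monomials in the objective of \textbf{P2}.
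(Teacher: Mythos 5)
Your overall route is the paper's route: treat the DoF constraints \eqref{eq15} as tight (since nulling only helps the objective), average, and solve the resulting affine relation for $P$. Steps 1 and 2 and the first two terms of the formula are obtained exactly as in the paper. The one concrete difference is where the third term comes from, and your proposal does not actually derive it. The paper does not work per-BS and then appeal to $\bar x_{k,j}\approx 1/J$; it first \emph{sums the $J+1$ tight constraints over $j=0,\dots,J$}, obtaining $\sum_{k}\sum_{j} n_{k,j}q_{k,j} = \sum_{j}D_j - \sum_{k}\sum_{j}x_{k,j}q_{k,j} - J - 1$, and then defines $P$ as the weighted fraction $\bigl(\sum_k\sum_j n_{k,j}q_{k,j}\bigr)/\bigl(\sum_k\sum_j q_{k,j}\bigr)$ with denominator $\approx \bar q_{k,j}JK$. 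In that aggregated form the $J+1$ noise slots contribute $\frac{J+1}{\bar q_{k,j}JK} = \frac{1}{\bar q_{k,j}K} + \frac{1}{\bar q_{k,j}JK}$: the first piece merges with $\frac{\bar D_j}{\bar q_{k,j}K}$ to give $\frac{\bar D_j - 1}{\bar q_{k,j}K}$, and the second piece is precisely the $-\frac{1}{\bar q_{k,j}JK}$ term you are missing (the association term $\sum_k\sum_j x_{k,j}q_{k,j}\approx \bar q_{k,j}K$, via $\sum_j x_{k,j}=1$, supplies the $-\frac{1}{J}$). Your single-BS equation contains only one ``$+1$'', so it structurally produces $\frac{\bar D_j-1}{\bar q_{k,j}K}-\frac{1}{J}$ and nothing more; the suggested ``refinement'' about the noise slot being shared across BSs is a gesture at the right bookkeeping but is not a derivation. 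The fix is simply to aggregate over $j$ before normalizing, after which the stated expression falls out by algebra. Your closing caveat about the mean-field substitutions is fair, and the paper is no more rigorous on that point than you are.
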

\begin{proof}
To maximize the sum rate, all DoFs of each BS are expected to be used for data transmission and interference nulling. Thus, all the constraints described by (\ref{eq15}) are close to equality. Adding these equations from $j=0$ to $j=J$, we have $\sum_{k = 1}^K {\sum_{j = 0}^J {{n_{k,j}}{q_{k,j}}} }  = \sum_{j = 0}^J {{D_j}}  - \sum_{k = 1}^K {\sum_{j = 0}^J {{x_{k,j}}{q_{k,j}}} }  - J - 1$. The probability that ${n_{k,j}}$ equals to $1$ can be derived as
\begin{align}
P =&\; \frac{{\sum\limits_{k = 1}^K {\sum\limits_{j = 0}^J {{n_{k,j}}{q_{k,j}}} } }}{{\sum\limits_{k = 1}^K {\sum\limits_{j = 0}^J {{q_{k,j}}} } }} = \frac{{\sum\limits_{j = 0}^J {{D_j}} \hspace{-0.025in}-\hspace{-0.025in} \sum\limits_{k = 1}^K {\sum\limits_{j = 0}^J {{x_{k,j}}{q_{k,j}}} }  \hspace{-0.025in}-\hspace{-0.025in} J \hspace{-0.025in}-\hspace{-0.025in} 1}}{{\bar q}_{k,j} J K} \nonumber\\
=&\; \frac{\sum\limits_{j = 0}^J {D_j} \hspace{-0.025in}-\hspace{-0.025in} {\bar q}_{k,j} K \hspace{-0.025in}-\hspace{-0.025in} J \hspace{-0.025in}-\hspace{-0.025in} 1}{{\bar q}_{k,j} J K}
= \frac{{\bar D}_j \hspace{-0.025in}-\hspace{-0.025in} 1}{{\bar q}_{k,j} K} \hspace{-0.025in}-\hspace{-0.025in} \frac{1}{J} \hspace{-0.025in}-\hspace{-0.025in} \frac{1}{{\bar q}_{k,j} J K}. \nonumber
\end{align}
\end{proof}

In a typical cellular network, the number of users in a macrocell can be more than 500, i.e., $K>500$. The DoFs are $D_j=O (N^2)$ for SBS's and $D_0=O ( N )$ for the MBS. As the typical number of antennas for MBS and SBS are 100 and 10, respectively, we have ${\bar D}_j \approx 100$. Therefore, the value of $P$ is expected to be small in a practical system, and the values of higher-order terms of $P$ are close to $0$. Due to this fact, we make linear approximations for the higher-order parts in the polynomial of (\ref{eq20}) and transform the objective of {\bf P2\/} to a linear function.

Let ${\bf{\tilde n}}$ be the vector concatenating the columns of matrix ${\left[ {{n_{k,j}}} \right]_{K \times J}}$. For a product with $M$ elements in ${\bf{\tilde n}}$ denoted as ${{{\bf{\tilde n}}}_{{i_1}}},{{{\bf{\tilde n}}}_{{i_2}}}, \cdots {{{\bf{\tilde n}}}_{{i_M}}}$, we have the following approximation.
\begin{align}\label{eq24}
{{{\bf{\tilde n}}}_{{i_1}}}{{{\bf{\tilde n}}}_{{i_2}}} \cdots {{{\bf{\tilde n}}}_{{i_M}}} = \frac{{{P^{M - 1}}}}{M}\left( {{{{\bf{\tilde n}}}_{{i_1}}} + {{{\bf{\tilde n}}}_{{i_2}}} +  \cdots  + {{{\bf{\tilde n}}}_{{i_M}}}} \right).
\end{align}
It can be easily verified that the expectations of both sides are equal to $P^M$, thus the long term performance of the problem with approximation equals to the performance of original problem. Considering that the product value is close to $0$ when $M$ is large enough, this approximation is expected to be accurate.

With the linear approximation of the polynomial objective function, {\bf P2\/} is transformed to the following integer programming problem.
\begin{align}
  {\bf P3:\/} \max_{\bf{\tilde n}} & \;\; {\bf{c}} {\bf{\tilde n}} \label{eq25}\\
  \mbox{subject to:}   & \;\; {\bf{A}} {\bf{\tilde n}} \le {\bf{b}}.\label{eq26}
\end{align}
The vector $\bf{c}$ is determined by applying the linear transformation of \eqref{eq24} to \eqref{eq20}. The constraint matrix $\bf{A}$ is given by

\begin{align}\label{eq27}
& {{\bf{A}}_{\left( {K + 1} \right)\left( {J + 1} \right) \times  K \left( {J + 1} \right)}} \doteq \left( {\begin{array}{*{20}{c}}
{\bf{Q}}\\
{\bf{I}}
\end{array}} \right).
\end{align}
$\bf{I}$ is a $K\left( {J + 1} \right) \times K\left( {J + 1} \right)$ identity matrix. $\bf{Q}$ is given by
\begin{align}\label{eq28}
{{\bf{Q}}_{\left( {J + 1} \right) \times K\left( {J + 1} \right)}} = \left( {\begin{array}{*{20}{c}}
{{{\bf{q}}_0}}&{\bf{0}}& \cdots &{\bf{0}}\\
{\bf{0}}&{{{\bf{q}}_1}}& \cdots &{\bf{0}}\\
 \vdots & \vdots & \ddots & \vdots \\
{\bf{0}}&{\bf{0}}& \cdots &{{{\bf{q}}_J}}
\end{array}} \right),
\end{align}
where
${{\bf{q}}_j} = \left[ {{q_{1,j}},{q_{2,j}}, ...,{q_{K,j}}} \right],~j = 0,1,...,J$.
The vector $\bf{b}$ is given by
\begin{align}\label{eq30}
&{\bf{b}}_{\left( {K + 1} \right) \left( {J + 1} \right) \times 1} \doteq  \\
&\left[ {{E_0},...,{E_J},1 \hspace{-0.025in}-\hspace{-0.025in} {x_{1,0}}, ..., 1 \hspace{-0.025in}-\hspace{-0.025in} {x_{K,0}}, 1 \hspace{-0.025in}-\hspace{-0.025in} {x_{1,1}}, ..., 1 \hspace{-0.025in}-\hspace{-0.025in} {x_{K,J}}} \right]^T, \nonumber
\end{align}
where ${E_j} = {D_j} - \sum_{k = 1}^K {{x_{k,j}}{q_{k,j}}} - 1, ~j = 0, 1,..., J$.

\subsubsection{Performance Upper Bound}

To verify the effectiveness of the approximation, we derive a performance upper bound for {\bf P2\/} and compare it with the proposed scheme in Section~\ref{sec:sim}. Consider a linear approximation given by
\begin{align}\label{eq32}
{{\bf{\tilde n}}_{{i_1}}}{{\bf{\tilde n}}_{{i_2}}} \cdots {{\bf{\tilde n}}_{{i_M}}} = \frac{{{{{\bf{\tilde n}}}_{{i_1}}} + {{{\bf{\tilde n}}}_{{i_2}}} +  \cdots  + {{{\bf{\tilde n}}}_{{i_M}}}}}{M}.
\end{align}
The resulting integer programming can be expressed as
\begin{align}
  {\bf P4:\/} \max_{\bf{\tilde n}} & \;\; {\bf{c}}' {\bf{\tilde n}} \label{eq33}\\
  \mbox{subject to:}   & \;\; {\bf{A}} {\bf{\tilde n}} \le {\bf{b}},\label{eq34}
\end{align}
where ${\bf{c}}'$ is determined by \eqref{eq32}.

\begin{lemma}\label{lemma2}
With the linear approximation described in \eqref{eq32}, the objective function of problem {\bf P4\/} is an upper bound for that of problem {\bf P2\/}.
\end{lemma}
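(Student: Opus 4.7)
The plan is to establish the pointwise inequality $\mathbf{c}'{\bf{\tilde n}} \ge \prod_{j=0}^J\prod_{k\in\mathcal{U}_j} W_{k,j}({\bf{\tilde n}})$ at every ${\bf{\tilde n}}\in\{0,1\}^{K(J+1)}$. Since \textbf{P2} and \textbf{P4} share the same feasible set defined by \eqref{eq15}--\eqref{eq17}, this pointwise dominance promotes immediately to $\max\textbf{P4}\ge\max\textbf{P2}$, which is the claim of the lemma.

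The first step is to re-express the \textbf{P2} objective, viewed as a function on $\{0,1\}^{K(J+1)}$, in an atomic $(1-n)$-basis with nonnegative coefficients. By \eqref{eq21}--\eqref{eq22}, each bracketed factor of $W_{k,j}$ is a nonnegative linear combination of $(1-n_{\cdot,\cdot})$ terms, possibly together with a nonnegative constant. So $W_{k,j}$ itself is a nonnegative combination of products of $(1-n)$ factors, which I write as $W_{k,j}=\sum_T a_T^{(k,j)}\prod_{i\in T}(1-n_i)$ with each $a_T^{(k,j)}\ge 0$. Taking the product over all $(k,j)$ and absorbing any repeated factors via the $\{0,1\}$-idempotent $(1-n_i)^2=1-n_i$ yields $\prod_{k,j}W_{k,j} = \sum_T d_T\prod_{i\in T}(1-n_i)$ with every $d_T\ge 0$.

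The second step linearizes a single atomic block $\prod_{i\in T}(1-n_i)$ via \eqref{eq32}: binomial expansion produces an alternating sum of elementary symmetric polynomials $e_k(n_T)$, and replacing each degree-$k$ monomial by its $\frac{1}{k}\sum$ image --- using the identity $\frac{1}{k}\binom{|T|-1}{k-1}=\frac{1}{|T|}\binom{|T|}{k}$ and $\sum_{k=0}^{|T|}(-1)^k\binom{|T|}{k}=0$ --- collapses the linearization to $1-\frac{1}{|T|}\sum_{i\in T} n_i$. A two-case check (all $n_i=0$ vs.\ at least one $n_i=1$) delivers $1-\frac{1}{|T|}\sum_{i\in T}n_i \ge \prod_{i\in T}(1-n_i)$ on $\{0,1\}^{|T|}$. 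By linearity of \eqref{eq32} and $d_T\ge 0$, $\mathbf{c}'{\bf{\tilde n}} = \sum_T d_T\bigl(1-\frac{1}{|T|}\sum_{i\in T}n_i\bigr) \ge \sum_T d_T\prod_{i\in T}(1-n_i) = \prod_{k,j}W_{k,j}$, closing the pointwise bound. The main obstacle is the bookkeeping that produces the nonnegative $(1-n)$-basis expansion in the first step; after multilinearizing (so that \eqref{eq32} is well-defined as an operator on $\{0,1\}$-functions), one has to verify that every regrouped coefficient $d_T$ stays nonnegative. Once that is secured, the rest is a single elementary identity for $\prod(1-n_i)$ and the obvious case split.
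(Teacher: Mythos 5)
Your proof is correct, but it takes a more roundabout (and in fact more careful) route than the paper. The paper's own proof is two lines: since the geometric mean is at most the arithmetic mean and $\sqrt[M]{{{{\bf{\tilde n}}}_{{i_1}}}\cdots{{{\bf{\tilde n}}}_{{i_M}}}}={{{\bf{\tilde n}}}_{{i_1}}}\cdots{{{\bf{\tilde n}}}_{{i_M}}}$ for 0--1 variables, each monomial satisfies \eqref{eq35}, and the bound is then asserted by applying \eqref{eq35} termwise to the higher-order expressions in \eqref{eq20}. You instead regroup the objective in the $(1-n)$-basis, observe from \eqref{eq21}--\eqref{eq22} that all block coefficients $d_T$ are nonnegative, compute via the binomial identity that the linearization \eqref{eq32} collapses each block $\prod_{i\in T}(1-n_i)$ to $1-\frac{1}{|T|}\sum_{i\in T}n_i$, and verify the blockwise inequality by a two-case check before aggregating. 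The difference matters: applied literally, the paper's termwise step is only airtight when the monomial coefficients are nonnegative, whereas the monomial expansion of \eqref{eq20} carries negative coefficients on odd-degree monomials (e.g., the $-n_{i_1}n_{i_2}n_{i_3}$ term of a cubic block), so the termwise replacement can push individual terms down. Your basis change shows the aggregate inequality nonetheless holds pointwise on $\{0,1\}^{K(J+1)}$, which, since \textbf{P2} and \textbf{P4} share the feasible set \eqref{eq15}--\eqref{eq17}, yields the lemma and the comparison of optima. In short, the paper buys brevity at the price of a sign issue it glosses over; your argument costs the extra bookkeeping of multilinearization and the block-collapse identity but closes that gap. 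One cosmetic point: strictly, your "${\bf{c}}'{\bf{\tilde n}}$" includes the constant produced by the linearization (the $\sum_T d_T$ part), but this is an imprecision inherited from the paper's own notation and does not affect the argument.
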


\begin{proof}
Due to the fact that the geometric mean is no greater than the arithmetic mean, we have $\sqrt[M]{{{{{\bf{\tilde n}}}_{{i_1}}}{{{\bf{\tilde n}}}_{{i_2}}} \cdots {{{\bf{\tilde n}}}_{{i_M}}}}} \le ({{{{{\bf{\tilde n}}}_{{i_1}}} + {{{\bf{\tilde n}}}_{{i_2}}} +  \cdots  + {{{\bf{\tilde n}}}_{{i_M}}}}})/{M}$. Since all elements of ${\bf{\tilde n}}$ are 0-1 variables, it can be easily verified that $\sqrt[M]{{{{{\bf{\tilde n}}}_{{i_1}}}{{{\bf{\tilde n}}}_{{i_2}}} \cdots {{{\bf{\tilde n}}}_{{i_M}}}}} = {{{\bf{\tilde n}}}_{{i_1}}}{{{\bf{\tilde n}}}_{{i_2}}} \cdots {{{\bf{\tilde n}}}_{{i_M}}}$. Thus, we have
\begin{align}\label{eq35}
{{{\bf{\tilde n}}}_{{i_1}}}{{{\bf{\tilde n}}}_{{i_2}}} \cdots {{{\bf{\tilde n}}}_{{i_M}}} \le \frac{{{{{\bf{\tilde n}}}_{{i_1}}} + {{{\bf{\tilde n}}}_{{i_2}}} +  \cdots  + {{{\bf{\tilde n}}}_{{i_M}}}}}{M}, \; \forall M \ge 2.
\end{align}
Applying \eqref{eq35} to all the higher-order expressions in \eqref{eq20}, ${\bf{c}}' {\bf{\tilde n}}$ is an upper bound to the objective function of {\bf P2\/}
\end{proof}

Based on Lemma~\ref{lemma2}, we further conclude that the optimal solution to problem {\bf P4\/} provides an upper bound to the optimal solution of {\bf P2\/}.

\subsubsection{Optimal Solution of {\bf P3\/} with Cutting Plane}

Since problem {\bf P3\/} has a linear objective function, the cutting plane method~\cite{Gomory58} can be used to derive the optimal solution. The idea of cutting plane is to find a plane that separates the non-integer solution from the polyhedron that satisfies the constraints and contains all integer feasible solutions.

Consider the polyhedron defined by ${\bf{A}} {\bf{\tilde n}} \le {\bf{b}}$,
determined by a combination of $\left( {K + 1} \right)\left( {J + 1} \right)$ inequalities as
\begin{align}\label{eq36}
{{\bf{a}}_i}{\bf{\tilde n}} \le {b_i},i = 1,2, ..., \left( {K + 1} \right)\left( {J + 1} \right),
\end{align}
where ${{\bf{a}}_i}$ is the $i$th row of ${\bf{A}}$. Let ${y_1}, {y_2}, ..., {y_{\left( {K + 1} \right)\left( {J + 1} \right)}} \ge 0$ and set
\begin{align}\label{eq37}
{{\bf{a}}^*} = \sum\limits_{i = 1}^{\left( {K + 1} \right)\left( {J + 1} \right)} {{y_i}{{\bf{a}}_i}}, \;\; {b^*} = \sum\limits_{i = 1}^{\left( {K + 1} \right)\left( {J + 1} \right)} {{y_i}{b_i}} .
\end{align}
Obviously, all solutions in the polyhedron ${\bf{A}} {\bf{\tilde n}} \le {\bf{b}}$ also satisfy ${{\bf{a}}^*}{\bf{\tilde n}} \le {b^*}$. If ${{\bf{a}}^*}$ is integral, i.e., all elements of ${{\bf{a}}^*}$ are integers, then all the {\em integer} solutions should satisfy
\begin{align}\label{eq38}
{{\bf{a}}^*}{\bf{\tilde n}} \le \left\lfloor {{b^*}} \right\rfloor,
\end{align}
where $\left\lfloor {{b^*}} \right\rfloor $ is the largest integer that is smaller than ${{b^*}}$. Then, \eqref{eq38} defines a cutting plane for {\bf P3\/}.

Adding an additional constraint described by the cutting plane, all the integer solutions are still included while some non-integer solutions are removed. Due to this property, we can first relax the integer constraint in solve {\bf P3\/} and solve the linear programming problem. If there are non-integer solutions, we add an additional constraint in the form of \eqref{eq38}. Then, we solve the linear programming problem with the updated constraints. If there are still non-integer elements in the solution vector, we continue to add another constraint following \eqref{eq38}. Such process terminates when all solution variables are integer. However, the effectiveness of this approach depends on the proper setting of parameters $y_i$. An efficient scheme to find the effective cutting plane was proposed in~\cite{Gomory58}; the details are omitted here due to lack of space.

\subsubsection{A Special Case without the Need for Cutting Plane}

We consider a special case with an additional condition. Suppose BS $j$ only receives a fixed amount of $L_j$ strongest multipath signals from each user and neglect the other multipath components with weaker signal strengths. We then have
\begin{align}\label{eq39}
{q_{1,j}} = {q_{2,j}} =  \cdots  = {q_{K,j}} = {L_j}, \; j = 0, 1, ...,J.
\end{align}
Note that, when $L_j$ is sufficiently large, this special case can be regarded
as the real case. Given \eqref{eq39}, we can divide both sides of \eqref{eq15} by $L_j$.  Then, ${\bf{A}}$ is updated by replacing ${\bf{q}}_j$ with
\begin{align}\label{eq40}
{{\bf{q}}_j} = {\left[ {1,1, ...,1} \right]^T}.
\end{align}
${\bf{b}}$ is updated as
\begin{align}\label{eq41}
&{\bf{b}}_{\left( {K + 1} \right) \left( {J + 1} \right) \times 1} \doteq \nonumber \\
&\left[ {\frac{{E_0}}{{L_0}}, ...,\frac{{E_J}}{{L_J}},1 \hspace{-0.025in}-\hspace{-0.025in} {x_{1,0}}, ..., 1 \hspace{-0.025in}-\hspace{-0.025in} {x_{K,0}},1 \hspace{-0.025in}-\hspace{-0.025in} {x_{1,1}}, ..., 1 \hspace{-0.025in}-\hspace{-0.025in} {x_{K,J}}} \right]^T.
\end{align}
It can be seen that there are exactly two $1$'s in each column of ${\bf{A}}$, with one from a column of ${\bf{Q}}$ and the other from a column of ${\bf{I}}$.
\begin{lemma}\label{lemma3}
${\bf{A}}$ is a totally unimodular matrix.
\end{lemma}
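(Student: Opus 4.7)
The plan is to prove Lemma~\ref{lemma3} by verifying the classical sufficient condition for total unimodularity due to Heller and Tompkins: a matrix with entries in $\{0,\pm 1\}$ is totally unimodular if (i) every column contains at most two nonzero entries, and (ii) the rows can be partitioned into two sets $R_1$ and $R_2$ such that any two nonzero entries of the same sign that share a column lie in different sets, while any two nonzero entries of opposite signs that share a column lie in the same set.

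First I would read off the structure of ${\bf A}$ in the special case. By \eqref{eq27}, \eqref{eq28} and \eqref{eq40}, every entry of ${\bf A}$ is in $\{0,1\}$. The block $\bf Q$ is a $(J+1)\times K(J+1)$ block diagonal matrix whose $j$th block row is the all-ones row vector of length $K$, so each column of $\bf Q$ has exactly one $1$. The block $\bf I$ is the $K(J+1)\times K(J+1)$ identity, so each of its columns also has exactly one $1$. Concatenating vertically, every column of ${\bf A}$ contains exactly two $1$'s, one coming from $\bf Q$ and one from $\bf I$, which establishes (i).

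Next I would exhibit the partition. Let $R_1$ be the first $J+1$ rows of ${\bf A}$ (the rows originating from $\bf Q$) and let $R_2$ be the remaining $K(J+1)$ rows (the rows originating from $\bf I$). Since all nonzero entries are $+1$, condition (ii) reduces to: the two $1$'s in each column must lie in different sets. By the structural observation above, each column has one $1$ in $R_1$ (contributed by $\bf Q$) and one $1$ in $R_2$ (contributed by $\bf I$), so the requirement is satisfied. Applying the Heller--Tompkins criterion, ${\bf A}$ is totally unimodular.

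I do not anticipate a substantial obstacle; the argument is essentially a direct verification once the block structure of ${\bf A}$ in the simplified regime \eqref{eq39}--\eqref{eq40} is written out. The only point that requires care is making sure the counting of nonzeros per column is correct after dividing constraint \eqref{eq15} through by $L_j$, which is why the step identifying the exact location of the two $1$'s in each column (one in $\bf Q$, one in $\bf I$) is emphasized in the plan.
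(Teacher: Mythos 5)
Your proof is correct, and it supplies an argument the paper does not actually contain: the paper's ``proof'' of Lemma~\ref{lemma3} is omitted entirely (it defers to a similar case in another reference), so there is nothing to compare against line by line. Your route is the standard one. After the normalization \eqref{eq39}--\eqref{eq40}, ${\bf A}$ is a $0$--$1$ matrix with exactly two ones per column, one in the ${\bf Q}$ block and one in the ${\bf I}$ block, so the Heller--Tompkins partition $R_1 = \{\mbox{rows of } {\bf Q}\}$, $R_2 = \{\mbox{rows of } {\bf I}\}$ applies verbatim; equivalently, ${\bf A}$ is the transpose of the incidence matrix of a bipartite graph (BS-constraint nodes versus $(k,j)$-pair nodes), which is a textbook totally unimodular matrix. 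One caveat worth flagging, not about the lemma itself but about how it is used immediately afterward: the Hoffman--Kruskal integrality conclusion (``all vertices of the relaxed polyhedron are integral'') requires an \emph{integral} right-hand side, whereas ${E_j}/{L_j}$ in \eqref{eq41} need not be an integer. Since $\tilde{\bf n}$ is integer-valued anyway, one should replace ${E_j}/{L_j}$ by $\lfloor {E_j}/{L_j}\rfloor$ before relaxing; with that harmless rounding your argument, combined with total unimodularity, fully justifies solving the LP relaxation in this special case.
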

\begin{proof}
Omitted due to lack of space, a similar case can be found in~\cite{Feng16}.
\end{proof}
For a linear programming problem with a unimodular constraint matrix, all decision variables to the optimal solution are integers~\cite{Schrijver98}. Thus, the optimal solution of {\bf P3\/} can be obtained by relaxing the integer constraints and solving the resulting linear programming problem.

\begin{figure} [!t]
\centering
\includegraphics[width=3.4in]{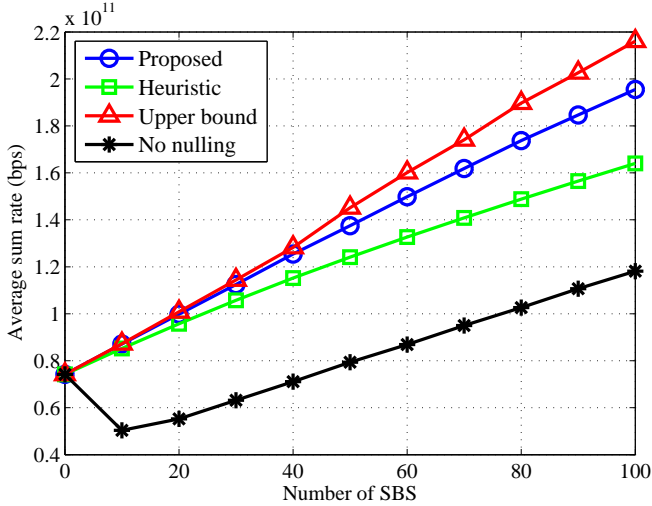}
\caption{Average sum rate versus number of SBS's. The number of users is 500.} 
\label{fig1}
\end{figure}

\begin{figure} [!t]
\centering
\includegraphics[width=3.4in]{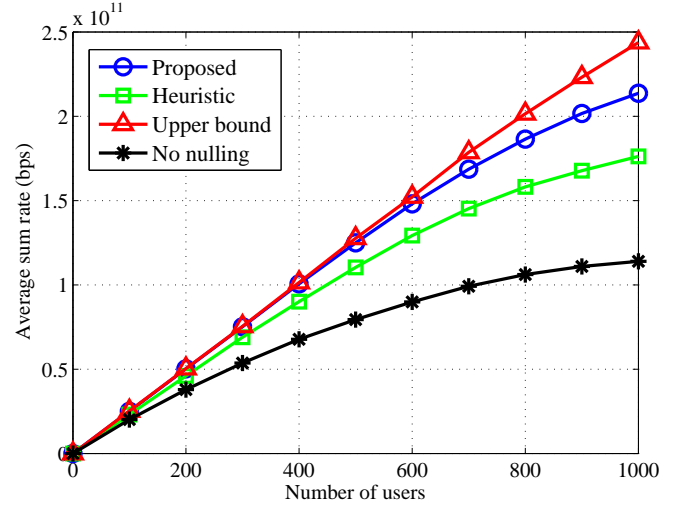}
\caption{Average sum rate versus average number of users. The number of SBS's is 50.} 
\label{fig2}
\end{figure}

\begin{figure} [!t]
\centering
\includegraphics[width=3.4in]{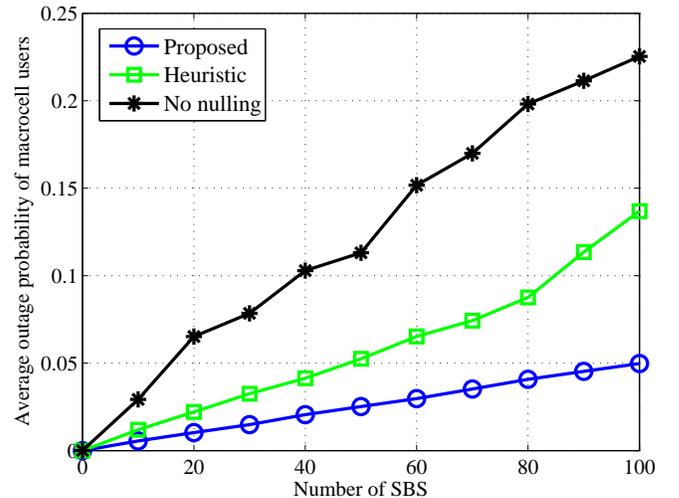}
\caption{Average outage probability of MUs versus number of SBS's. The number of users is 500.} 
\label{fig3}
\end{figure}

\section{Simulation Study \label{sec:sim}}

We validate the performance of the proposed scheme with Matlab simulations. We consider a macrocell overlaid with multiple small cells. The radii of the macrocell and a small cell are 1000 m and 50 m, respectively. Users in the coverage area of an SBS is served the SBS, while the others are served by the MBS. The macrocell and small cells share a total bandwidth of 4 MHz. The transmit power of the MBS is set to 40 dBm, while the transmit power of the MUEs has five levels ranging from 10 dBm to 30 dBm according to the distance between the MUE and MBS. The transmit power of an SBS and an SUE are set to 25 dBm and 15 dBm, respectively. We employ the ITU path loss model for both indoor and outdoor environments~\cite{itu}. The ratio $\frac{{{M_j} - {S_j} + 1}}{{{S_j}}}$ is set to be 100 for MBS and 10 for SBS's, respectively.

We consider a heuristic scheme for comparison (termed Heuristic). In the heuristic scheme, each BS chooses the user that causes the strongest interference and nullifies the signals of this user. This process is continued until the DoF constraint at a BS is violated. We also consider the case that no interference nulling is performed as a benchmark (termed No Nulling).

The sum rates of different schemes versus the number of SBS's are presented in Fig.~\ref{fig1}. It can be seen that without interference management, the sum rate first decreases as the SBS's are deployed, since the SINRs of MUEs are significantly reduced. Compared to the No Nulling scheme, a significant performance gain can be achieved by interference nulling, as a result of enhanced SINRs of both MUEs and SUEs. The sum rate with the proposed scheme is higher than the heuristic scheme since the proposed scheme optimizes the performance from the perspective of the entire network. We can also observe that the performance of the proposed scheme is close to the upper bound, indicating that the solution with linear approximation is near optimal. The performance gap between the proposed and heuristic scheme becomes larger as the number of SBS's is increased, since the heuristic scheme only achieves a local optimal solution for each BS. The resulting performance loss increases as the network gets larger.

In Fig.~\ref{fig2}, we compare the sum rates under different numbers of users. Due to the same reasons, similar trends are observed for the different schemes. It can be seen when the number of users is sufficiently large, the performance of proposed scheme is also affected by interference. This is because all the DoFs are used. Note that, the performance gap between the proposed scheme and the upper bound is increased when the number of users becomes large. This is because $P$ becomes smaller as $K$ is increased. Then the higher-order products are more likely to be $0$, and the linear approximation of the upper bound becomes more inaccurate.

Fig.~\ref{fig3} shows the outage performance of macrocell users (MU). We chose to evaluate the MUs since their average SINRs are lower, and hence they are more vulnerable to interference compared to small cell users. Due to the aggregated interference caused by SBS's to MUE, the average outage probability of MUs increases as the number of SBS's grows. It can be seen that with interference nulling performed by SBS's, the average outage probability of MUs is significantly reduced. When the number of SBS's gets large, the outage probabilities of proposed scheme and the heuristic scheme are also increased, since part of the DoFs are used to deal with interference between an SBS and SUEs served by other SBS's. The proposed scheme achieves the best performance, showing that the proposed approximation is accurate and achieves near optimal performance.

\section{Conclusions \label{sec:con}}

In this paper, we applied nested array in a massive MIMO HetNet and addressed the problem of interference nulling scheduling to maximize the sum rate of MUEs and SUEs. We formulated an integer programming problem and proposed an approximation solution algorithm, as well as a performance upper bound. The simulation results demonstrated the superior performance of the proposed scheme.

\section*{Acknowledgment}

This work is supported in part by the US National Science Foundation (NSF) under Grant CNS-1247955, and by the Wireless Engineering Research and Education Center (WEREC) at Auburn University.


\end{document}